\documentclass[conference]{IEEEtran}
\IEEEoverridecommandlockouts

\usepackage{cite}
\usepackage{amsmath,amssymb,amsfonts,amsthm,mathtools,bm}
\usepackage{graphicx}
\usepackage{textcomp}
\usepackage{xcolor}
\usepackage{booktabs}
\usepackage{array}
\usepackage{enumitem}
\usepackage{microtype}
\usepackage{placeins}
\usepackage{url}
\usepackage{hyperref}
\hypersetup{colorlinks=true,linkcolor=blue,citecolor=blue,urlcolor=blue}

\newtheorem{theorem}{Theorem}
\newtheorem{proposition}{Proposition}
\newtheorem{corollary}{Corollary}
\newtheorem{definition}{Definition}
\newtheorem{remark}{Remark}

\newcommand{\X}{\mathcal X}
\newcommand{\Y}{\mathcal Y}
\newcommand{\R}{\mathbb R}
\newcommand{\Z}{\mathbb Z}
\newcommand{\E}{\mathbb E}
\newcommand{\Prob}{\mathbb P}

\newcommand{\pos}[1]{\left[#1\right]_+}
\newcommand{\1}{\mathbf 1}
\newcommand{\calR}{\mathcal R}

\begin{document}

\title{Sparse Kernel Mechanisms for Locally Differentially Private Discrete Channels}

\author{%
\begin{tabular}{ccc}
\begin{minipage}[t]{0.31\textwidth}
\centering
Amirreza Zamani\\
\textit{Department of Information Science and Engineering}\\
\textit{KTH Royal Institute of Technology}\\
Stockholm, Sweden\\
amizam@kth.se
\end{minipage}
&
\begin{minipage}[t]{0.31\textwidth}
\centering
Parastoo Sadeghi\\
\textit{School of Engineering and Technology}\\
\textit{UNSW Canberra}\\
Canberra, Australia\\
p.sadeghi@unsw.edu.au
\end{minipage}
&
\begin{minipage}[t]{0.31\textwidth}
\centering
Mikael Skoglund\\
\textit{Department of Information Science and Engineering}\\
\textit{KTH Royal Institute of Technology}\\
Stockholm, Sweden\\
skoglund@kth.se
\end{minipage}
\end{tabular}}

\maketitle

\begin{abstract}
We study sparse locally private discrete mechanisms generated by a nonnegative kernel and an input-dependent admissible output support. The support set is intentionally small relative to the ambient alphabet and acts as a mechanism-design parameter. This formulation covers metric-ball supports, sparse exponential mechanisms, sparse staircase mechanisms, sparse randomized response, sparse additive kernels such as Skellam perturbations, and Poisson-binomial-inspired bounded-support channels. We give a general exact characterization of pure and approximate local differential privacy for this sparse-kernel class. Pure local differential privacy forces all supports to coincide, so genuinely input-dependent sparse supports are incompatible with pure privacy. In the approximate regime, the privacy defect decomposes exactly into support leakage and overlap excess loss. We instantiate the formula for several discrete mechanism families. For sparse staircase mechanisms, a shell-ratio condition eliminates overlap excess loss. For graph-metric supports, overlap of metric balls is necessary for nontrivial privacy. For sparse randomized response, the support-leakage term has a closed form in terms of support mismatch. For generic radius-truncated additive kernels and, in particular, sparse Skellam mechanisms, we obtain exact finite-support formulas in terms of the Skellam CDF and an exact Bessel-ratio overlap condition. Numerical evaluations illustrate how support radius and kernel shape separately affect support leakage and overlap excess loss.
\end{abstract}

\begin{IEEEkeywords}
local differential privacy, sparse mechanisms, exponential mechanism, staircase mechanism, Skellam mechanism, support leakage
\end{IEEEkeywords}

\section{Introduction}

Local differential privacy (LDP) requires each user to randomize its data before any collection takes place, thereby avoiding a trusted curator. Many standard local randomizers assign positive probability to the entire output alphabet. For discrete domains this can be inefficient: if the private symbol is $x$, outputs that are far from $x$ may be implausible, low utility, or expensive to communicate. This motivates a sparse-support viewpoint in which the local channel randomizes $x$ only over a small admissible set $S(x)\subseteq\Y$.

The support restriction serves several roles. First, it reflects the geometry of the problem by concentrating probability on plausible surrogates for the private input. Second, it can reduce distortion and communication by excluding far-away outputs. Third, it changes the privacy accounting itself. If $y\in S(x)\setminus S(x')$, then observing $Y=y$ rules out $x'$. Thus support mismatch is not only a modeling detail; it is a privacy-relevant object. Under pure LDP, such mismatch is impossible unless all supports coincide. Under approximate LDP, mismatch is possible, but it must be paid for through the $\delta$ budget.

A previous paper studied this idea for the sparse discrete-Laplace and sparse Gaussian kernels~\cite{zamani2026sparseLDP}. That work developed explicit radius-truncated privacy--sparsity tradeoffs for those two parametric families. The present paper has a different emphasis. We do not revisit the detailed Laplace/Gaussian radius-truncation analysis and we do not include the separate optimized sparse randomized-response design problem. Instead, we isolate the general sparse-kernel mechanism class and show how the same exact privacy accounting applies to a wider set of discrete kernels: score-based exponential mechanisms, shell-based staircase mechanisms, sparse randomized response as a finite-alphabet piecewise-constant channel, Skellam perturbations, and Poisson-binomial-inspired bounded-support channels.

\subsection{Positioning and novelty}
The main novelty of this paper is to use hockey stick divergence sum. This is useful because it makes sparse mechanism design modular. In fact, the standard hockey-stick divergence sum is partitioned into support leakage and overlap excess loss. The contribution is the resulting sparse-support accounting and its kernel-specific consequences.
Once the support-leakage term and the overlap excess-loss term are separated, different mechanisms can be studied through their own kernel likelihood-ratio structure on the overlap. This point is useful for discrete mechanisms where supports are intentionally restricted by design, for example because the range is finite, communication is limited, or only nearby outputs are semantically meaningful.

The paper develops three concrete consequences of this viewpoint. First, for sparse exponential mechanisms, the usual utility-score law becomes a support-restricted kernel and the privacy condition separates the score difference from the support mismatch. Second, for sparse staircase mechanisms on an ambient metric space, a simple shell-ratio condition eliminates the overlap excess-loss contribution for all nearby input pairs. Third, for radius-truncated sparse Skellam mechanisms, the support-leakage term admits an exact CDF expression and the overlap term is governed by an exact Bessel-ratio condition. These are kernel-specific consequences of the same sparse-support accounting.

The present paper is deliberately complementary to a separate statistical-optimization direction for support-constrained randomized response. Here randomized response is used only as one mechanism instance illustrating the decomposition and the support-mismatch effect. The optimization of the randomized-response probabilities and support radius, together with downstream estimation risk, is left to a separate mechanism-design treatment. This keeps the present paper focused on the information-theoretic sparse-kernel framework and on non-Laplace/non-Gaussian discrete-noise instances.

\subsection{Related work}

Differential privacy was introduced by Dwork, McSherry, Nissim, and Smith~\cite{dwork2006calibrating}; the local model is discussed in the monograph of Dwork and Roth~\cite{dwork2014book}. The local model is closely related to randomized response~\cite{warner1965randomized} and practical locally private systems such as RAPPOR~\cite{rappor2014}. Sparse estimation under local privacy has been studied at the statistical-estimation level~\cite{acharya2021sparse}, and minimax lower and upper bounds under LDP are developed in~\cite{duchi2014local,duchi2018minimax}. These works study statistical estimation performance; our focus is the exact privacy effect of imposing an explicit support map on the channel.
Sparse point-wise privacy leakage and its fundamental limits are studied in~\cite{zamani2026pointwise}.

At the level of mechanism families, metric-based mechanisms appear in geo-indistinguishability and local counting-query privacy~\cite{andres2013geo,kacem2018geometric}. The exponential mechanism of McSherry and Talwar~\cite{mcsherry2007mechanism} provides a score-based sampling template. Staircase mechanisms are optimal or extremal for broad classes of utility objectives in central and local privacy settings~\cite{geng2014optimal,kairouz2014extremal}. Discrete-noise constructions such as Skellam and Poisson-binomial mechanisms arise in distributed and federated privacy~\cite{agarwal2021skellam,chen2022poisson}. Finite-support perturbation laws also arise in statistical disclosure control, including the ABS perturbation analysis of Sadeghi and Chien~\cite{Sadeghi_Chien_2024}. 
\cite{Sadeghi_Chien_2024} considered centralized DP where the relation between datasets is restricted to neighboring datasets for counting queries. Whereas in this paper we generalize this to non-neighboring but still relation restricted inputs.

Information-theoretic views of differential privacy through divergences and hypothesis testing are also relevant. In discrete spaces, approximate DP is equivalent to a hockey-stick, or $E_\gamma$, divergence bound. Asoodeh, Aliakbarpour, and Calmon connect LDP to the contraction of $E_\gamma$-divergence~\cite{asoodeh2021ldpcontraction}. We use this characterization as a starting point and specialize it to channels with input-dependent supports.

\subsection{Contributions}

The contributions are as follows.
\begin{enumerate}[label=(\roman*)]
    \item We give a sparse-kernel LDP characterization for arbitrary nonnegative kernels and input-dependent support maps.
    \item
    We show that approximate privacy loss has two distinct sources: outputs that are possible under one input but not the other, and excessive likelihood ratios on outputs shared by both inputs. This gives an exact privacy-defect formula, including for range-limited privacy constraints.
    \item We instantiate the theory for sparse exponential, staircase, randomized-response, Skellam, and Poisson-binomial-inspired channels.
    \item For sparse staircase mechanisms, we give a metric shell-ratio condition under which the overlap excess-loss contribution vanishes.
    \item For radius-truncated sparse Skellam mechanisms, we derive exact CDF support-leakage formulas and exact Bessel-ratio overlap conditions.
    \item We provide exact finite-sum numerical evaluations that show how support radius and kernel shape separately affect support leakage and overlap excess loss.
\end{enumerate}

\section{Preliminaries}

Let $\X$ and $\Y$ be finite or countably infinite alphabets. Throughout, $\varepsilon\ge0$ and $0\le\delta\le1$; radii and separations on $\Z$ are integers. This includes the integer-line examples below. A channel $M(\cdot\mid x)$ is pure $\varepsilon$-LDP if, for all $x,x'\in\X$ and all $A\subseteq\Y$,
\begin{equation}
M(A\mid x)\le e^\varepsilon M(A\mid x').
\end{equation}
It is $(\varepsilon,\delta)$-LDP if
\begin{equation}
M(A\mid x)\le e^\varepsilon M(A\mid x')+\delta.
\end{equation}
Since $\Y$ is discrete, approximate LDP is equivalent to the hockey-stick, or $E_\gamma$, divergence bound with $\gamma=e^\varepsilon$~\cite{dwork2014book,asoodeh2021ldpcontraction}:
\begin{align}\label{eq:hockey-itw}
&\sup_{A\subseteq\Y}\{M(A\mid x)-e^\varepsilon M(A\mid x')\}
=\nonumber\\&
\sum_{y\in\Y}\pos{M(y\mid x)-e^\varepsilon M(y\mid x')}
\le\delta.
\end{align}

We also use a relation-restricted version of local privacy. For a symmetric privacy relation $\calR\subseteq\X\times\X$, a channel satisfies $(\varepsilon,\delta)$-LDP on $\calR$ if the same condition is required only for $(x,x')\in\calR$. On the integer line we write
\begin{equation}
\calR_H=\{(x,x'):1\le |x-x'|\le H\}
\end{equation}
and call this range-$H$ LDP. The usual all-pairs definition corresponds to $\calR=\X\times\X$.

The following definition fixes the ordered privacy-defect notation used throughout the paper.
\begin{definition}
For a discrete channel $M$, define the ordered $\varepsilon$-privacy defect
\begin{equation}
\Delta^M_\varepsilon(x,x'):=\sum_{y\in\Y}\pos{M(y\mid x)-e^\varepsilon M(y\mid x')}.
\end{equation}
When the channel is clear from the context, we write $\Delta_\varepsilon(x,x')$.
\end{definition}

The following remark clarifies why the numerical and metric examples use a range-restricted privacy relation.
\begin{remark}
In the integer-line examples, $\X=\Y=\Z$. For mechanisms with finite-radius supports on this infinite alphabet, global all-pairs approximate LDP is trivial in the sense that the worst-case defect is one: sufficiently distant supports are disjoint. Therefore numerical examples below are explicitly interpreted on a relation $\calR_H$. This is the same distinction used in metric privacy, local counting-query privacy, and radius-truncated mechanism design.
\end{remark}

\section{Sparse Kernel Mechanisms}

The following definition introduces the sparse-kernel mechanism class studied in the paper.
\begin{definition}\label{def:sparse-kernel-itw}
Let $K_\theta:\X\times\Y\to[0,\infty)$ be a nonnegative kernel indexed by a parameter $\theta$, and let $S(x)\subseteq\Y$ be nonempty for each $x$. Define
\begin{equation}\label{eq:sparse-kernel-itw}
M_\theta(y\mid x)=\frac{K_\theta(x,y)}{N_x(\theta)}\1\{y\in S(x)\},
\end{equation}
where, $N_x(\theta)=\sum_{y'\in S(x)}K_\theta(x,y').$
We assume $K_\theta(x,y)>0$ on $S(x)$ and $N_x(\theta)<\infty$. The set $S(x)$ is the admissible output support for input $x$, and $|S(x)|$ is the sparsity level.
\end{definition}

The next theorem is the main accounting result. It is intentionally stated for a general kernel, because all later mechanisms are obtained by changing $K_\theta$ and $S(x)$.

The following theorem gives the exact pure and approximate LDP characterization for sparse kernel mechanisms.
\begin{theorem}\label{thm:master-itw}
For the sparse kernel mechanism in \eqref{eq:sparse-kernel-itw}, pure $\varepsilon$-LDP holds if and only if $S(x)=S(x')$ for all $x,x'$ and, on the common support $S$,
\begin{equation}\label{eq:pure-itw}
\sup_{x,x'}\sup_{y\in S}\left[
\log\frac{K_\theta(x,y)}{K_\theta(x',y)}+\log\frac{N_{x'}(\theta)}{N_x(\theta)}
\right]\le\varepsilon.
\end{equation}
Moreover, $(\varepsilon,\delta)$-LDP holds if and only if $\sup_{x,x'}\Delta^{M_\theta}_\varepsilon(x,x')\le\delta$, and
\begin{align}\label{eq:decomp-itw}
\Delta^{M_\theta}_\varepsilon(x,x')
&=
\underbrace{\sum_{y\in S(x)\setminus S(x')}
\frac{K_\theta(x,y)}{N_x(\theta)}}_{\text{support leakage}}
\notag\\
&\quad+
\underbrace{\sum_{y\in S(x)\cap S(x')}
\left[
\frac{K_\theta(x,y)}{N_x(\theta)}-e^\varepsilon\frac{K_\theta(x',y)}{N_{x'}(\theta)}
\right]_+}_{\text{overlap excess loss}}.
\end{align}
On a symmetric relation $\calR$, approximate privacy is characterized by the same decomposition with the supremum restricted to $(x,x')\in\calR$. Pure privacy requires $S(x)=S(x')$ for each $(x,x')\in\calR$ and the bound in \eqref{eq:pure-itw} on that pair's common support. Thus supports must coincide within each connected component of the graph induced by $\calR$, but may differ between components.
\end{theorem}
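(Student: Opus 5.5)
The proof splits into three parts: the approximate characterization with its decomposition, the pure characterization, and the relation-restricted statements. The approximate part is essentially bookkeeping on top of the hockey-stick identity \eqref{eq:hockey-itw}. Since $\Y$ is discrete, $(\varepsilon,\delta)$-LDP is equivalent to $\Delta^{M_\theta}_\varepsilon(x,x')\le\delta$ for every ordered pair. The supremum over $A$ in \eqref{eq:hockey-itw} is attained by $A^\star=\{y: M(y\mid x)>e^\varepsilon M(y\mid x')\}$, and this set is countable, so no measurability issue arises. Taking the supremum over pairs then gives the stated equivalence.

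For the decomposition \eqref{eq:decomp-itw}, I will partition $\Y$ into three sets.
\begin{enumerate}[label=(\alph*)]
\item On $\Y\setminus S(x)$, we have $M(y\mid x)=0$, so each summand is $[-e^\varepsilon M(y\mid x')]_+=0$.
\item On $S(x)\setminus S(x')$, we have $M(y\mid x')=0$, so each summand equals $K_\theta(x,y)/N_x(\theta)$. This is the support-leakage term.
\item On $S(x)\cap S(x')$, both indicators equal one and the summand is exactly the overlap excess loss.
\end{enumerate}
All terms are nonnegative, so splitting the countable sum is legitimate.

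For pure privacy, I first prove necessity of equal supports. Suppose $y\in S(x)\setminus S(x')$. Take $A=\{y\}$. Then $M(A\mid x)=K_\theta(x,y)/N_x>0$ by the positivity assumption, while $e^\varepsilon M(A\mid x')=0$, which is a contradiction. Applying this to both orderings gives $S(x)=S(x')$, and hence all supports equal a common $S$.

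On $S$, pure LDP for all sets $A$ is equivalent to the pointwise bound $M(y\mid x)\le e^\varepsilon M(y\mid x')$ for all $y$. One direction takes singletons; the other sums the pointwise bound over $y\in A$. Taking logarithms of the pointwise bound on $S$ gives exactly the bracket in \eqref{eq:pure-itw}, and taking the supremum gives the condition. Conversely, \eqref{eq:pure-itw} together with equal supports yields the pointwise bound, and hence pure LDP.

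For a symmetric relation $\calR$, every step above is pairwise, so restricting to $(x,x')\in\calR$ changes nothing in the argument. The pure case then forces $S(x)=S(x')$ whenever $(x,x')\in\calR$. By transitivity along paths in the graph with edge set $\calR$, supports are constant on each connected component. Nothing links different components, so supports there may differ. To show they genuinely can, I will give an example: take $\calR$ with two components, a constant kernel on each, and disjoint supports for the two components. This satisfies the condition with $\varepsilon=0$.

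I expect the main obstacle to be handling countably infinite $\Y$ cleanly. The pointwise-to-setwise implication requires summing possibly infinite series, which is fine for nonnegative terms. In the pure case, the bracket in \eqref{eq:pure-itw} must be well defined, which holds because $K_\theta>0$ on $S$ and $N_x\in(0,\infty)$. Beyond checking these points, the argument is a direct verification.
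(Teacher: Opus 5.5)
Your proposal is correct and follows essentially the same route as the paper's proof. Singleton events force equal supports and the pointwise log-ratio bound for pure privacy. The three-way partition of $\Y$ into $S(x)\setminus S(x')$, $S(x)\cap S(x')$, and $\Y\setminus S(x)$ gives the decomposition, and support equality propagates along paths of $\calR$. There are two small differences. The paper re-derives the hockey-stick identity \eqref{eq:hockey-itw} for countably infinite $\Y$, using absolute summability of $b_y=M(y\mid x)-e^\varepsilon M(y\mid x')$ and the maximizing event $\{y:b_y>0\}$, whereas you invoke it directly. Your explicit two-component example is an addition that shows differing supports across components are actually realizable, not merely unconstrained.
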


\begin{proof}
Write $P(y)=M_\theta(y\mid x)$ and $Q(y)=M_\theta(y\mid x')$ for a fixed ordered pair. Since the kernel is positive on the nonempty set $S(x)$ and $N_x(\theta)<\infty$, we have $0<N_x(\theta)<\infty$, and $S(x)$ is exactly the support of $P$.

For pure privacy, apply the event inequality to the singleton $\{y\}$. If $y\in S(x)\setminus S(x')$, then $P(y)>0=e^\varepsilon Q(y)$, a contradiction. Reversing the ordered pair proves $S(x)=S(x')$. On this common support,
\[
\log\frac{P(y)}{Q(y)}
=\log\frac{K_\theta(x,y)}{K_\theta(x',y)}
+\log\frac{N_{x'}(\theta)}{N_x(\theta)}.
\]
Hence pure privacy implies \eqref{eq:pure-itw}. Conversely, that bound gives $P(y)\le e^\varepsilon Q(y)$ at every point, including the zeros outside the common support. Summing over any event proves pure privacy.

For approximate privacy, put $b_y=P(y)-e^\varepsilon Q(y)$. The series is absolutely summable because $\sum_y|b_y|\le1+e^\varepsilon$. For every event $A$,
\[
\sum_{y\in A}b_y\le\sum_{y:b_y>0}b_y,
\]
and equality is attained by $A_* =\{y:b_y>0\}$. Thus the supremum over events equals $\sum_y[b_y]_+$, proving the hockey-stick characterization also for countably infinite alphabets.

Partition $\Y$ into $S(x)\setminus S(x')$, $S(x)\cap S(x')$, and $\Y\setminus S(x)$. On the first set $Q=0$, so the summand is $P$. On the second set both probabilities have their normalized kernel expressions. On the third set $P=0$, so the positive part is zero. This gives \eqref{eq:decomp-itw}; taking the supremum over ordered pairs proves both directions of the approximate claim.

The same argument applies separately to every pair in $\calR$. Symmetry supplies both directions on each edge; equality of supports then propagates along paths. No privacy constraint is imposed between different connected components.
\end{proof}

The following corollary records the immediate obstruction to pure privacy for genuinely input-dependent supports.
\begin{corollary}
A sparse kernel mechanism with genuinely input-dependent supports cannot satisfy pure $\varepsilon$-LDP for any finite $\varepsilon$. In the approximate regime, support mismatch is possible, but it contributes directly to the $\delta$ budget through the support-leakage term in \eqref{eq:decomp-itw}.
\end{corollary}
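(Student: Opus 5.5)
The corollary follows directly from Theorem~\ref{thm:master-itw}, and I would prove its two sentences separately. For the first, I take ``genuinely input-dependent supports'' to mean that there exist $x_0,x_1\in\X$ with $S(x_0)\neq S(x_1)$. For the relation-restricted version, this would be read as a mismatch on some pair in $\calR$, or within one connected component of the graph induced by $\calR$. Fix any finite $\varepsilon\ge0$. Since $S(x_0)\neq S(x_1)$, after possibly swapping the labels there is a point $y\in S(x_0)\setminus S(x_1)$. By Definition~\ref{def:sparse-kernel-itw}, $K_\theta(x_0,y)>0$ and $0<N_{x_0}(\theta)<\infty$, so $M_\theta(\{y\}\mid x_0)>0$. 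On the other hand, $M_\theta(\{y\}\mid x_1)=0$. The singleton event $A=\{y\}$ therefore violates $M(A\mid x_0)\le e^\varepsilon M(A\mid x_1)$, because the right-hand side is zero for every finite $\varepsilon$. This is exactly the ``only if'' direction of the pure part of Theorem~\ref{thm:master-itw}, so I could simply cite it.

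In the relation-restricted case, the same argument applies to an edge of $\calR$ along which the supports change. If two vertices in the same connected component have different supports, then some path between them must contain an edge $(x,x')$ with $S(x)\neq S(x')$, and the singleton argument applies to that edge.

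For the second sentence, I would invoke the decomposition \eqref{eq:decomp-itw}. Both summands are nonnegative, so
\[
\Delta_\varepsilon(x_0,x_1)\ \ge\ \sum_{y\in S(x_0)\setminus S(x_1)}\frac{K_\theta(x_0,y)}{N_{x_0}(\theta)}\ =\ M_\theta\bigl(S(x_0)\setminus S(x_1)\mid x_0\bigr)\ >\ 0.
\]
By the approximate part of the theorem, any admissible $\delta$ must be at least this support-leakage mass. The mass does not depend on $\varepsilon$, which makes precise the claim that mismatch ``contributes directly to the $\delta$ budget.'' Mismatch is nonetheless possible: for example, if $\delta$ is at least $\sup_{x,x'}\Delta_\varepsilon(x,x')$, which is always at most $1$, then $(\varepsilon,\delta)$-LDP holds. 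So the approximate regime is not excluded outright.

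There is no substantive obstacle here. The only point needing care is the interpretive one: fixing the meaning of ``genuinely input-dependent,'' and noting that strict positivity of $K_\theta$ on $S(x)$ is what makes the leaked mass strictly positive.
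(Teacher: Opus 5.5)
Your proposal is correct and follows the paper's proof: a singleton event in a nonempty support difference rules out pure privacy, the edge-along-a-path argument covers the relation-restricted case, and the support-leakage mass gives the lower bound on $\delta$. The only cosmetic difference is how you get that bound: you drop the nonnegative overlap term from \eqref{eq:decomp-itw}, whereas the paper plugs the event $S(x)\setminus S(x')$ directly into the event definition of $(\varepsilon,\delta)$-LDP, and the two routes are equivalent.
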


\begin{proof}
For two distinct supports, at least one ordered set difference is nonempty. A point in that difference has positive probability under one input and zero under the other, contradicting pure privacy on its singleton event. For approximate privacy, choose the entire event $S(x)\setminus S(x')$ to obtain the necessary lower bound
\[
\delta\ge M_\theta(S(x)\setminus S(x')\mid x).
\]
Theorem~\ref{thm:master-itw} gives the exact additional overlap contribution. On a relation, the obstruction applies only to required pairs; differing supports in distinct connected components do not contradict pure privacy on that relation.
\end{proof}

The following definition names the pointwise privacy loss on the common support of two inputs.
\begin{definition}
For $y\in S(x)\cap S(x')$, define
\begin{equation}
L_M(x,x';y)=\log\frac{M(y\mid x)}{M(y\mid x')}.
\end{equation}
Then the overlap excess-loss term can be written as
\begin{equation}\label{eq:overlap-loss-form}
\sum_{\substack{y\in S(x)\cap S(x'):\ L_M(x,x';y)>\varepsilon}}
M(y\mid x)\bigl(1-e^{\varepsilon-L_M(x,x';y)}\bigr).
\end{equation}
\end{definition}
Indeed, on the overlap $M(y\mid x')=M(y\mid x)e^{-L_M(x,x';y)}$. Factoring out the positive quantity $M(y\mid x)$ shows that a summand is positive precisely when $L_M(x,x';y)>\varepsilon$, which proves \eqref{eq:overlap-loss-form}.

The following remark interprets the two terms in the privacy-defect decomposition as separate design effects.
\begin{remark}
The support map and the kernel have different privacy roles. The support map determines which outputs create support leakage. The kernel determines the likelihood ratios on the overlap. Enlarging supports often decreases leakage by increasing overlap, but it can also expose more points to the overlap comparison. This separation is the basic privacy--sparsity design principle used throughout the paper.
\end{remark}

\subsection{Metric and graph-ball supports}

The support map in Definition~\ref{def:sparse-kernel-itw} can be arbitrary, but many sparse mechanisms are generated by a metric or graph structure. Let $(\Omega,\rho)$ be a metric space with $\X,\Y\subseteq\Omega$. For a radius $r$, define the metric-ball support
\begin{equation}
B_r(x)=\{y\in\Y:\rho(x,y)\le r\}.
\end{equation}
A graph-metric support is the special case in which $\Omega$ is the vertex set of a graph and $\rho=d_G$ is shortest-path distance. Thus $B_r(x)$ is the set of output vertices within graph distance $r$ from input vertex $x$.

The following proposition identifies the basic overlap condition needed for nontrivial privacy with metric-ball supports.
\begin{proposition}\label{prop:ball-overlap-itw}
Consider any sparse kernel mechanism whose supports are $S(x)=B_r(x)$. If there exists $(x,x')\in\calR$ such that $B_r(x)\cap B_r(x')=\varnothing$, then the ordered defect satisfies
\begin{equation}
\Delta_\varepsilon(x,x')=1.
\end{equation}
Consequently, nontrivial $(\varepsilon,\delta)$-LDP on $\calR$ with $\delta<1$ is impossible unless $B_r(x)\cap B_r(x')\ne\varnothing$ for every $(x,x')\in\calR$.
\end{proposition}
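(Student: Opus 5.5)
The plan is to apply the decomposition \eqref{eq:decomp-itw} of Theorem~\ref{thm:master-itw} directly to the offending pair. Fix $(x,x')\in\calR$ with $B_r(x)\cap B_r(x')=\varnothing$, so that $S(x)=B_r(x)$ and $S(x')=B_r(x')$ are disjoint.

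First, the overlap excess-loss term in \eqref{eq:decomp-itw} is a sum over the empty set $S(x)\cap S(x')$, so it vanishes. Second, disjointness gives $S(x)\setminus S(x')=S(x)$. The support-leakage term therefore equals
\[
\sum_{y\in S(x)}\frac{K_\theta(x,y)}{N_x(\theta)}=\frac{N_x(\theta)}{N_x(\theta)}=1 .
\]
This uses the standing assumptions of Definition~\ref{def:sparse-kernel-itw}: $S(x)$ is nonempty, the kernel is positive on $S(x)$, and $N_x(\theta)<\infty$. Together they give $0<N_x(\theta)<\infty$, so the normalization is legitimate. Hence $\Delta_\varepsilon(x,x')=1$. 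We can also note the upper bound $\Delta_\varepsilon\le\sum_y M(y\mid x)=1$, which holds for any channel, so this pair attains the maximal defect.

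For the consequence, the relational version of Theorem~\ref{thm:master-itw} says that $(\varepsilon,\delta)$-LDP on $\calR$ holds if and only if $\sup_{(x,x')\in\calR}\Delta_\varepsilon(x,x')\le\delta$. A single pair with disjoint balls forces $\delta\ge1$. By contraposition, any $\delta<1$ requires $B_r(x)\cap B_r(x')\neq\varnothing$ for every $(x,x')\in\calR$.

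Alternatively, one can bypass the theorem and take the event $A=B_r(x)$. Then $M(A\mid x)=1$ and $M(A\mid x')=0$, so $1\le\delta$. There is no real obstacle here; the only care needed is the normalization check above, so that the leakage sum is exactly one.
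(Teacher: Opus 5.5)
Your proof is correct and essentially matches the paper's. The paper sandwiches the defect: the event $A=B_r(x)$ gives $\Delta_\varepsilon(x,x')\ge1$, and the pointwise bound $[M(y\mid x)-e^\varepsilon M(y\mid x')]_+\le M(y\mid x)$ gives $\Delta_\varepsilon(x,x')\le1$; these are exactly your alternative argument and your side remark, while your main route reads the same value off the decomposition in Theorem~\ref{thm:master-itw} (empty overlap sum, full-mass leakage) and concludes by the same contraposition for $\delta<1$.
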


\begin{proof}
Set $A=B_r(x)$. Disjointness gives $M(A\mid x)=1$ and $M(A\mid x')=0$, so the event characterization yields $\Delta_\varepsilon(x,x')\ge1$. On the other hand, $[M(y\mid x)-e^\varepsilon M(y\mid x')]_+\le M(y\mid x)$ at each output, giving $\Delta_\varepsilon(x,x')\le1$ after summation. Thus the defect is exactly one and cannot be bounded by any $\delta<1$.
\end{proof}

This proposition supplies a necessary geometric condition for nontrivial privacy. Overlap alone does not certify a prescribed privacy budget: support leakage and overlap excess loss must still sum to at most $\delta$ for every required pair. Sparsity is governed by the ball cardinalities $|B_r(x)|$, which may grow slowly on a path or grid but exponentially on a high-degree tree. Thus the same privacy range $\calR$ can have different sparsity costs depending on the geometry of the output space.

\section{Score, Shell, and Randomized-Response Kernels}

\subsection{Sparse exponential mechanisms}

The exponential mechanism samples outputs according to a utility or score function. Let $u:\X\times\Y\to\R$ be a utility score and let $\eta>0$. The sparse exponential mechanism is
\begin{equation}\label{eq:exp-itw}
M^{\rm exp}_\eta(y\mid x)=\frac{e^{\eta u(x,y)}}{Z_x^{\rm exp}(\eta)}\1\{y\in S(x)\},
\end{equation}
where $Z_x^{\rm exp}(\eta)=\sum_{y'\in S(x)}e^{\eta u(x,y')}.$
Assume $0<Z_x^{\rm exp}(\eta)<\infty$ for every $x$. Here $\eta$ controls how strongly high-utility outputs are favored.

The following proposition specializes the sparse-kernel characterization to score-based exponential mechanisms.
\begin{proposition}
The mechanism in \eqref{eq:exp-itw} satisfies Theorem~\ref{thm:master-itw} with kernel $K_\eta(x,y)=e^{\eta u(x,y)}$. Under common support $S$, it is pure $\varepsilon$-LDP if and only if
\begin{equation}\label{eq:exp-pure-itw}
\sup_{x,x'}\sup_{y\in S}\left[
\eta(u(x,y)-u(x',y))+\log\frac{Z_{x'}^{\rm exp}(\eta)}{Z_x^{\rm exp}(\eta)}
\right]\le\varepsilon.
\end{equation}
Under this common-support assumption, if $Z_x^{\rm exp}(\eta)$ is independent of $x$ and $\sup_{x,x',y}|u(x,y)-u(x',y)|\le\Delta_u$, then $\eta\Delta_u\le\varepsilon$ is sufficient for pure $\varepsilon$-LDP.
\end{proposition}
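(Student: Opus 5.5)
The plan is to treat the exponential mechanism as a direct instance of Definition~\ref{def:sparse-kernel-itw} and then read off the conclusions of Theorem~\ref{thm:master-itw}.

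\emph{Instantiation.} Set $K_\eta(x,y)=e^{\eta u(x,y)}$. This kernel is nonnegative and in fact strictly positive everywhere, so in particular on $S(x)$. Its normalizer is $N_x(\eta)=\sum_{y'\in S(x)}e^{\eta u(x,y')}=Z_x^{\rm exp}(\eta)$, which is finite and positive by assumption. Hence \eqref{eq:exp-itw} coincides with \eqref{eq:sparse-kernel-itw}, and every hypothesis of Theorem~\ref{thm:master-itw} is met. The theorem therefore applies verbatim, including the decomposition \eqref{eq:decomp-itw} for the approximate regime.

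\emph{Pure-privacy condition.} Under a common support $S$, substitute into \eqref{eq:pure-itw}:
\[
\log\frac{K_\eta(x,y)}{K_\eta(x',y)}=\eta\bigl(u(x,y)-u(x',y)\bigr),
\qquad
\log\frac{N_{x'}(\eta)}{N_x(\eta)}=\log\frac{Z_{x'}^{\rm exp}(\eta)}{Z_x^{\rm exp}(\eta)} .
\]
This turns \eqref{eq:pure-itw} into \eqref{eq:exp-pure-itw}. The ``if and only if'' is inherited directly from the theorem, since the support-equality requirement holds by assumption.

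\emph{Sufficient condition.} If $Z_x^{\rm exp}(\eta)$ does not depend on $x$, the logarithmic normalizer term vanishes. The bound $|u(x,y)-u(x',y)|\le\Delta_u$ then gives
\[
\eta\bigl(u(x,y)-u(x',y)\bigr)\le\eta\Delta_u\le\varepsilon
\]
uniformly over $x,x',y$, so \eqref{eq:exp-pure-itw} holds.

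There is no real obstacle. The only point requiring care is checking that the positivity and finiteness hypotheses of Definition~\ref{def:sparse-kernel-itw} are met, and they follow from $e^{\eta u}>0$ together with the stated assumption $0<Z_x^{\rm exp}(\eta)<\infty$.
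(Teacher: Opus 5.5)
Your proposal is correct and follows essentially the same route as the paper: you check that $K_\eta(x,y)=e^{\eta u(x,y)}$ with normalizer $Z_x^{\rm exp}(\eta)$ meets the hypotheses of Theorem~\ref{thm:master-itw}, identify the log-likelihood ratio on the common support, and inherit the equivalence from the theorem. The only cosmetic difference is in the sufficient condition: you verify \eqref{eq:exp-pure-itw} and then invoke the theorem's ``if'' direction, whereas the paper sums the pointwise bounds over events directly.
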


\begin{proof}
The assumed finite, positive normalizer makes \eqref{eq:exp-itw} a channel, with a strictly positive kernel on $S(x)$. For any common output,
\[
\frac{M^{\rm exp}_\eta(y\mid x)}{M^{\rm exp}_\eta(y\mid x')}
=e^{\eta(u(x,y)-u(x',y))}
\frac{Z_{x'}^{\rm exp}(\eta)}{Z_x^{\rm exp}(\eta)}.
\]
Taking logarithms and applying Theorem~\ref{thm:master-itw} proves the equivalence under common support. If the normalizers agree, the second factor is one. The sensitivity assumption then bounds the log-ratio by $\eta\Delta_u\le\varepsilon$ for every ordered pair and every output in the common support. Summing the resulting pointwise inequalities proves pure privacy. Equal normalizers alone do not remove leakage when supports differ.
\end{proof}

\subsection{Sparse staircase mechanisms}

Staircase mechanisms are optimal or extremal in several differential privacy settings~\cite{geng2014optimal,kairouz2014extremal}. A sparse staircase mechanism assigns a shell weight according to the distance from the input. To make the metric argument precise, let $(\Omega,d)$ be an ambient integer-valued metric space with $\X,\Y\subseteq\Omega$. Fix $r\in\mathbb N$ and shell weights
\begin{equation}
a_0\ge a_1\ge\cdots\ge a_r>0.
\end{equation}
The support around input $x$ is $S_r(x)=\{y\in\Y:d(x,y)\le r\}$, and the sparse staircase mechanism is
\begin{equation}\label{eq:staircase-itw}
M_a(y\mid x)=\frac{a_{d(x,y)}}{A_x}\1\{y\in S_r(x)\},
\end{equation}
where $A_x=\sum_{y'\in S_r(x)}a_{d(x,y')}.$
Assume each $S_r(x)$ is nonempty and $A_x<\infty$. Extend $a_j$ by zero for $j>r$, so the kernel is defined on the entire ambient space.

The following proposition gives a shell-ratio condition that controls the overlap excess loss for sparse staircase mechanisms.
\begin{proposition}\label{prop:staircase-itw}
The sparse staircase mechanism satisfies Theorem~\ref{thm:master-itw} with kernel $K(x,y)=a_{d(x,y)}$ with the stated zero extension. Under common support $S$, it is pure $\varepsilon$-LDP if and only if
\begin{equation}\label{eq:staircase-pure-itw}
\sup_{x,x'}\sup_{y\in S}\left[
\log\frac{a_{d(x,y)}}{a_{d(x',y)}}+\log\frac{A_{x'}}{A_x}
\right]\le\varepsilon.
\end{equation}
If $A_x= A,\ \forall x,$ and
\begin{equation}\label{eq:shell-ratio-itw}
\max_{\substack{0\le j,k\le r\\ |j-k|\le H}}\log\frac{a_j}{a_k}\le\varepsilon,
\end{equation}
then, for every pair $x,x'$ with ambient distance $d(x,x')\le H$, the overlap excess-loss contribution in \eqref{eq:decomp-itw} is zero.
\end{proposition}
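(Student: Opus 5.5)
The plan is to obtain the first two claims directly from Theorem~\ref{thm:master-itw} and to prove the overlap statement pointwise with the triangle inequality.

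\emph{Instantiation.} Set $K(x,y)=a_{d(x,y)}$, using the zero extension for $d(x,y)>r$. Then $K(x,y)>0$ exactly on $S_r(x)$, and $N_x=A_x\in(0,\infty)$ by assumption. Hence \eqref{eq:staircase-itw} is an instance of \eqref{eq:sparse-kernel-itw}, and the hypotheses of Definition~\ref{def:sparse-kernel-itw} hold. The pure-privacy characterization \eqref{eq:staircase-pure-itw} is then \eqref{eq:pure-itw} with $K$ and $N_x$ replaced by $a_{d(\cdot,y)}$ and $A_x$. On the common support $S$, both distances $d(x,y)$ and $d(x',y)$ are at most $r$, so every ratio appearing there is finite and positive.

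\emph{Vanishing of the overlap term.} Fix $x,x'$ with $d(x,x')\le H$ and assume $A_x=A_{x'}=A$. Take $y\in S_r(x)\cap S_r(x')$ and write $j=d(x,y)$ and $k=d(x',y)$. Both lie in $\{0,\dots,r\}$, and they are integers because the metric is integer-valued. The triangle inequality gives $|j-k|=|d(x,y)-d(x',y)|\le d(x,x')\le H$, so $(j,k)$ is one of the index pairs in the maximum \eqref{eq:shell-ratio-itw}. Because the normalizers are equal,
\[
L_M(x,x';y)=\log\frac{a_j}{a_k}\le\varepsilon .
\]
By \eqref{eq:overlap-loss-form}, every summand of the overlap term therefore vanishes. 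Equivalently, $a_j/A-e^\varepsilon a_k/A\le 0$ pointwise, so each positive part in \eqref{eq:decomp-itw} is zero.

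\emph{Main obstacle.} The only real content is the reverse triangle inequality, which bounds the shell-index gap by the input distance. The points to watch are that $j$ and $k$ must both be at most $r$, which holds because $y$ lies in the overlap, and that the equal-normalizer hypothesis removes the $\log(A_{x'}/A_x)$ term. I would close by remarking that the support-leakage term can still be nonzero when $S_r(x)\ne S_r(x')$. The proposition claims nothing about that term.
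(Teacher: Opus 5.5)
Your proposal is correct and follows essentially the same route as the paper. You instantiate Theorem~\ref{thm:master-itw} with $K(x,y)=a_{d(x,y)}$ and normalizer $A_x$ to get the pure-privacy equivalence, then use the triangle inequality $|d(x,y)-d(x',y)|\le d(x,x')\le H$ on the overlap so that, with equal normalizers, the shell-ratio condition kills every overlap summand; your argument for an arbitrary pair also covers the reverse ordered pair and the empty-overlap case, which the paper mentions separately.
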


\begin{proof}
The zero-extended shell kernel is positive on $S_r(x)$ and has normalizer $A_x$. On a common support its log-likelihood ratio is exactly the expression in \eqref{eq:staircase-pure-itw}, so Theorem~\ref{thm:master-itw} gives necessity and sufficiency.

For the second claim, fix $d(x,x')\le H$ and $y\in S_r(x)\cap S_r(x')$. Set $j=d(x,y)$ and $k=d(x',y)$; then $j,k\in\{0,\ldots,r\}$. The two triangle inequalities give $j\le d(x,x')+k$ and $k\le d(x,x')+j$, hence $|j-k|\le H$. Therefore
\[
\log\frac{M_a(y\mid x)}{M_a(y\mid x')}
=\log\frac{a_j}{a_k}+\log\frac{A_{x'}}{A_x}
\le\varepsilon,
\]
where the normalizer term vanishes by assumption. Every overlap summand is consequently zero. The shell condition ranges over both $(j,k)$ and $(k,j)$, so it also covers the reverse ordered pair. For an empty overlap the sum is zero by convention, although the support leakage is then one.
\end{proof}

The following remark explains how sparse discrete-Laplace mechanisms arise as a special staircase family.
\begin{remark}
Choosing geometric shell weights $a_j=e^{-\lambda j}$ recovers the sparse discrete-Laplace kernel. The staircase family extends it by allowing the shell profile itself to be designed rather than fixed by a single inverse-temperature parameter.
\end{remark}

\subsection{Sparse randomized response as a kernel instance}

Let $S(x)\subseteq\Y$ satisfy $x\in S(x)$ and $|S(x)|=s$. For $p\ge q\ge0$ with $p+(s-1)q=1$, define
\begin{equation}\label{eq:rr-itw}
M^{\rm RR}(y\mid x)=
\begin{cases}
p, & y=x,\\
q, & y\in S(x)\setminus\{x\},\\
0, & y\notin S(x).
\end{cases}
\end{equation}
This is a piecewise-constant sparse channel. For $q>0$, its actual support is $S(x)$ and it is an instance of Definition~\ref{def:sparse-kernel-itw}. If $q=0$, normalization gives $p=1$ and the actual support reduces to $\{x\}$; the defect formula below still applies directly.

The following proposition gives the exact approximate-privacy defect for sparse randomized response.
\begin{proposition}\label{prop:rr-itw}
For $x\ne x'$, the ordered defect of \eqref{eq:rr-itw} is
\begin{align}\label{eq:rr-defect-itw}
\Delta^{\rm RR}_\varepsilon(x,x')=
q\,|S(x)&\setminus(S(x')\cup\{x\})|
\nonumber \\&+
\begin{cases}
p, & x\notin S(x'),\\
\pos{p-e^\varepsilon q}, & x\in S(x').
\end{cases}
\end{align}
If there are at least two inputs and their admissible supports coincide, pure $\varepsilon$-LDP is equivalent to $p\le e^\varepsilon q$. For $q>0$ this is $p/q\le e^\varepsilon$; for $q=0$ no finite pure-privacy guarantee holds.
\end{proposition}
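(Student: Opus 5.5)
The plan is to compute the defect directly from its definition, $\Delta_\varepsilon(x,x')=\sum_y\pos{M^{\rm RR}(y\mid x)-e^\varepsilon M^{\rm RR}(y\mid x')}$. I will not route through the kernel decomposition \eqref{eq:decomp-itw}, because the case $q=0$ is not literally an instance of Definition~\ref{def:sparse-kernel-itw}; the same partition idea will nevertheless organize the sum. The summand is zero for $y\notin S(x)$, since $P(y)=0$ there. The remaining outputs $y\in S(x)$ split into the distinguished point $y=x$ and the points $y\in S(x)\setminus\{x\}$, and the latter split further according to whether $y\in S(x')$.

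For $y\in S(x)\setminus\{x\}$ with $y\notin S(x')$, we have $Q(y)=0$, so the summand is $q$. Counting these outputs gives $q\,|S(x)\setminus(S(x')\cup\{x\})|$. For $y\in S(x)\setminus\{x\}$ with $y\in S(x')$, the key observation is that $Q(y)\ge q$. Indeed, $Q(y)$ is either $p$ (if $y=x'$) or $q$, and $p\ge q$. Hence $P(y)-e^\varepsilon Q(y)\le q-e^\varepsilon q\le 0$ since $\varepsilon\ge 0$, and the summand vanishes. For the point $y=x$, we know $x\ne x'$, so $Q(x)=q$ if $x\in S(x')$ and $Q(x)=0$ otherwise. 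This gives the two cases $\pos{p-e^\varepsilon q}$ and $p$. Summing the three contributions yields \eqref{eq:rr-defect-itw}. I expect the only step needing care to be the overlap bound $Q(y)\ge q$, in particular the subcase $y=x'$, which uses $p\ge q$.

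For the pure-privacy claim, I assume $S(x)=S$ for all $x$. Then $x\in S(x')$ for every $x'$, and for $x\ne x'$ the ratio $P(y)/Q(y)$ on $S$ takes only the values $p/q$ (at $y=x$), $q/p$ (at $y=x'$), and $1$. Pure $\varepsilon$-LDP is equivalent to the pointwise bound $P(y)\le e^\varepsilon Q(y)$ for every $y$: necessity follows from singleton events and sufficiency from summing over events, as in Theorem~\ref{thm:master-itw}. The constraints at $y=x'$ and at the other points hold automatically because $q\le p\le e^\varepsilon p$. The binding constraint is $y=x$, namely $p\le e^\varepsilon q$, and the existence of two distinct inputs guarantees that this constraint is actually imposed.

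If $q>0$, the condition is equivalent to $p/q\le e^\varepsilon$. If $q=0$, then $p=1$ and the condition reads $1\le 0$, which is impossible. Equivalently, each support collapses to $\{x\}$, and these distinct singletons violate the support-equality requirement of Theorem~\ref{thm:master-itw}, so no finite $\varepsilon$ works.
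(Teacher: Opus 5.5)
Your proposal is correct and follows essentially the same route as the paper. Both partition the hockey-stick sum into four kinds of outputs:
\begin{itemize}
\item outputs outside $S(x)$;
\item the mismatch set $S(x)\setminus(S(x')\cup\{x\})$;
\item the truthful point $y=x$;
\item the remaining common outputs, whose terms vanish because $M^{\rm RR}(y\mid x')\in\{p,q\}$ there and $p\ge q$.
\end{itemize}
The only difference is cosmetic. The paper gets the pure-privacy equivalence by observing that, under common support, the defect reduces to $\pos{p-e^\varepsilon q}$ and must vanish. You instead check $P(y)\le e^\varepsilon Q(y)$ pointwise, which is the safer formulation, since your intermediate list of ratios is undefined when $q=0$.
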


\begin{proof}
Fix $x\ne x'$. Outputs outside $S(x)$ have zero probability under the first input and contribute nothing. Each $y\in S(x)\setminus(S(x')\cup\{x\})$ contributes $q$. At $y=x$, the first probability is $p$ and the second is zero if $x\notin S(x')$, or $q$ otherwise. These give the two cases in \eqref{eq:rr-defect-itw}.

It remains to consider common outputs other than $x$. If $y=x'$, the difference is $q-e^\varepsilon p\le0$; otherwise it is $(1-e^\varepsilon)q\le0$. Here both inequalities use $p\ge q\ge0$ and $\varepsilon\ge0$. There are no further positive contributions, proving the formula even when $q=0$.

Under common admissible support and with at least two inputs, the defect for each distinct pair is $[p-e^\varepsilon q]_+$. Pure privacy is therefore equivalent to $p\le e^\varepsilon q$. When $q>0$, this is the stated ratio condition. When $q=0$, normalization forces $p=1$, and the event $\{x\}$ has probabilities one and zero under distinct inputs, so finite pure privacy is impossible. With only one input the privacy condition is vacuous.
\end{proof}

The following remark clarifies the role of sparse randomized response in this paper.
\begin{remark}
This proposition is only used as a closed-form illustration of support leakage for a piecewise-constant sparse channel. The problem of optimizing $p,q$, and the support radius for statistical utility is not part of this paper; it is a separate support-constrained randomized-response design problem.
\end{remark}

\section{Sparse Discrete-Noise Kernels}

\subsection{Translation-invariant additive kernels}

Before specializing to the Skellam distribution, it is useful to record the generic radius-truncated additive form. Let $\phi:\Z\to[0,\infty)$ be a symmetric pmf with $\phi(k)=\phi(-k)$. For $\X\subseteq\Z$, $\Y=\Z$, and radius $r\ge0$ with $Z_r^\phi>0$, define
\begin{equation}\label{eq:additive-itw}
M^\phi_r(y\mid x)=\frac{\phi(y-x)}{Z_r^\phi}\1\{|y-x|\le r\},
\
Z_r^\phi=\sum_{j=-r}^{r}\phi(j).
\end{equation}
For $x'=x+h$ with integer $h>0$, the truncation windows are $\{-r,\ldots,r\}$ and $\{h-r,\ldots,h+r\}$ after translation. They equal the actual supports when $\phi$ is positive on $[-r,r]$; with internal zeros, the actual supports are the positive-mass subsets, and the formula below follows directly from \eqref{eq:hockey-itw} without treating the full windows as supports in Theorem~\ref{thm:master-itw}.

The following proposition gives the exact defect formula for any translation-invariant additive kernel with radius truncation.
\begin{proposition}\label{prop:additive-itw}
For the mechanism in \eqref{eq:additive-itw}, let $\delta_h^\phi(\varepsilon,r)=\Delta_\varepsilon(0,h)$, computed using the translated channel. If $0<h\le2r$ then
\begin{align}\label{eq:additive-defect-itw}
\delta_h^\phi(\varepsilon,r)
&=\frac{1}{Z_r^\phi}\sum_{k=-r}^{h-r-1}\phi(k)
\notag\\
&\quad+
\frac{1}{Z_r^\phi}\sum_{k=h-r}^{r}
\pos{\phi(k)-e^\varepsilon\phi(k-h)}.
\end{align}
If $h>2r$, then $\delta_h^\phi(\varepsilon,r)=1$.
\end{proposition}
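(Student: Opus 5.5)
The plan is to compute the hockey-stick sum \eqref{eq:hockey-itw} directly rather than invoke the support decomposition of Theorem~\ref{thm:master-itw}. The paragraph before the proposition warns that internal zeros of $\phi$ make the truncation windows differ from the actual supports, and the direct computation avoids this issue. First, by translation invariance, the defect depends only on $h=x'-x$, so I fix $x=0$ and $x'=h>0$. Then
\[
P(y)=\frac{\phi(y)}{Z_r^\phi}\1\{|y|\le r\},\qquad Q(y)=\frac{\phi(y-h)}{Z_r^\phi}\1\{|y-h|\le r\}.
\]
Both channels share the same normalizer $Z_r^\phi$. This is the key simplification: the normalizer is a sum over a symmetric window of fixed length and does not depend on the center. (Symmetry of $\phi$ is not even needed for this step, since the window is translated along with $\phi$.)

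Next I split the sum $\sum_y[P(y)-e^\varepsilon Q(y)]_+$ over $y=k\in\Z$ into three ranges.
\begin{itemize}
\item For $|k|>r$: $P(k)=0$ and $Q\ge0$, so the positive part vanishes.
\item For $-r\le k\le h-r-1$: here $k<h-r$, so $|k-h|>r$ and $Q(k)=0$. The summand is $P(k)=\phi(k)/Z_r^\phi$. This gives the first sum in \eqref{eq:additive-defect-itw}. If $\phi(k)=0$ the term is zero anyway, which is consistent with the internal-zero remark.
\item For $h-r\le k\le r$: both indicators equal one, since $k-h\ge -r$ and $k-h\le r-h<r$. The summand is $[\phi(k)-e^\varepsilon\phi(k-h)]_+/Z_r^\phi$, which is the second sum.
\end{itemize}
The condition $0<h\le 2r$ ensures $h-r\le r$, so the overlap range is nonempty and the first range lies inside $[-r,r]$. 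The ranges are disjoint and exhaust $[-r,r]$, which proves \eqref{eq:additive-defect-itw}.

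For $h>2r$, the windows $[-r,r]$ and $[h-r,h+r]$ are disjoint, because $h-r>r$. Hence $Q=0$ wherever $P>0$, and the defect is $\sum_y P(y)=1$. This is the same argument as in Proposition~\ref{prop:ball-overlap-itw}, now applied to actual supports, which lie inside the windows.

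I do not expect a real obstacle. The only point that needs care is the bookkeeping of index ranges at the window boundaries, namely checking that $k\le h-r-1$ is exactly the condition $Q(k)=0$ within $[-r,r]$, together with the observation that equal normalizers let the common factor $1/Z_r^\phi$ be pulled out of the positive part.
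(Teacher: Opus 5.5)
Your proposal is correct and follows essentially the same route as the paper: a direct evaluation of the hockey-stick sum \eqref{eq:hockey-itw} with the common normalizer $Z_r^\phi$, splitting the window $\{-r,\ldots,r\}$ into the leakage range $\{-r,\ldots,h-r-1\}$ and the overlap $\{h-r,\ldots,r\}$, and using disjointness of the windows when $h>2r$. The paper's proof also records the trivial case $h=0$ and the reverse-pair identity $\Delta_\varepsilon(h,0)=\Delta_\varepsilon(0,h)$, which follows from symmetry of $\phi$; these are used later to justify maximizing over positive separations but are not needed for the statement itself.
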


\begin{proof}
Since $Z_r^\phi>0$, the mechanism is well defined even if some masses inside the truncation window vanish. Translation gives $\Delta_\varepsilon(x,x+h)=\Delta_\varepsilon(0,h)$. For $1\le h\le2r$, partition the first input's window as
\begin{align*}
G_0&=\{-r,\ldots,r\},\quad G_h=\{h-r,\ldots,h+r\},\\
G_0\setminus G_h&=\{-r,\ldots,h-r-1\},\\
G_0\cap G_h&=\{h-r,\ldots,r\}.
\end{align*}
On the first interval the second probability is zero, so the contribution is $\phi(k)/Z_r^\phi$. On the second it is $[\phi(k)-e^\varepsilon\phi(k-h)]_+/Z_r^\phi$. Outside the first window the positive part is zero. Summing gives \eqref{eq:additive-defect-itw} directly from the hockey-stick formula, without requiring strictly positive masses throughout the window.

If $h>2r$, the windows are disjoint and the first input's window has probabilities one and zero, so the defect is one. If $h=0$, it is zero since $\varepsilon\ge0$. Finally, the change of variable $k=h-j$ and symmetry of $\phi$ give
\begin{align*}
\Delta_\varepsilon(h,0)
&=\sum_{j\in\Z}[M^\phi_r(j\mid0)-e^\varepsilon M^\phi_r(j\mid h)]_+\\
&=\Delta_\varepsilon(0,h).
\end{align*}
Thus the result depends only on the absolute separation, which justifies maximizing over positive $h$ for the symmetric range relation.
\end{proof}

This proposition shows that many finite-support discrete perturbation laws share the same boundary/overlap structure; the only difference is the mass function $\phi$. The Skellam case below is useful because its support-leakage term can be written using the Skellam CDF and its overlap condition has an exact Bessel-ratio form.

\subsection{Sparse Skellam mechanisms}

The Skellam distribution is a discrete additive analogue of the Gaussian distribution and has been used in distributed private learning~\cite{agarwal2021skellam}. Let $\psi_\mu(k)=\Prob[Z=k]$ be the pmf of a symmetric Skellam random variable with variance parameter $\mu>0$. Equivalently, $Z=N_1-N_2$ with independent $N_1,N_2\sim{\rm Poisson}(\mu/2)$, and
\begin{equation}\label{eq:skellam-pmf-itw}
\psi_\mu(k)=e^{-\mu}I_{|k|}(\mu),
\end{equation}
where $I_n$ is the modified Bessel function of the first kind of order $n$. For $\X,\Y\subseteq\Z$, the sparse Skellam mechanism is
\begin{equation}\label{eq:skellam-general-itw}
M^{\rm Sk}_\mu(y\mid x)=\frac{\psi_\mu(y-x)}{Z_x^{\rm Sk}(\mu)}\1\{y\in S(x)\},
\end{equation}
where $Z_x^{\rm Sk}(\mu)=\sum_{y'\in S(x)}\psi_\mu(y'-x).$
The following proposition applies the sparse-kernel decomposition to the general sparse Skellam channel.
\begin{proposition}
The sparse Skellam mechanism satisfies Theorem~\ref{thm:master-itw} with $K_\mu(x,y)=\psi_\mu(y-x)$. In particular,
\begin{align}
\Delta^{\rm Sk}_\varepsilon(x,x')
&=
\sum_{y\in S(x)\setminus S(x')}
\frac{\psi_\mu(y-x)}{Z_x^{\rm Sk}(\mu)}
\notag\\
&\quad+
\sum_{y\in S(x)\cap S(x')}
\left[
\frac{\psi_\mu(y-x)}{Z_x^{\rm Sk}(\mu)}
-e^\varepsilon\frac{\psi_\mu(y-x')}{Z_{x'}^{\rm Sk}(\mu)}
\right]_+ .
\end{align}
\end{proposition}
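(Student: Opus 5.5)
The plan is to verify that the sparse Skellam channel in \eqref{eq:skellam-general-itw} is an instance of Definition~\ref{def:sparse-kernel-itw}, and then read off the displayed identity from the decomposition \eqref{eq:decomp-itw} of Theorem~\ref{thm:master-itw}. Concretely, I would take the kernel $K_\mu(x,y)=\psi_\mu(y-x)$ with the given support map $S(x)$. I would then check the three hypotheses of the definition.

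First, the kernel is nonnegative. Second, it is strictly positive on $S(x)$, and in fact on all of $\Z$, because $I_n(\mu)>0$ for every integer $n\ge0$ and every $\mu>0$. This positivity follows from the series $I_n(\mu)=\sum_{m\ge0}(\mu/2)^{2m+n}/(m!\,(m+n)!)$. Alternatively, it follows from the representation $Z=N_1-N_2$: every integer $k$ is attained with positive probability, for instance via $N_1=\max(k,0)$ and $N_2=\max(-k,0)$. Third, the normalizer $Z_x^{\rm Sk}(\mu)=\sum_{y'\in S(x)}\psi_\mu(y'-x)$ is finite, being at most $\sum_{k\in\Z}\psi_\mu(k)=1$. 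It is also strictly positive because $S(x)$ is nonempty and the kernel is positive on it. Hence $M^{\rm Sk}_\mu(\cdot\mid x)$ is a well-defined channel whose actual support is exactly $S(x)$.

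With these checks done, Theorem~\ref{thm:master-itw} applies verbatim with $N_x(\mu)=Z_x^{\rm Sk}(\mu)$. Substituting $K_\mu(x,y)=\psi_\mu(y-x)$ and $K_\mu(x',y)=\psi_\mu(y-x')$ into \eqref{eq:decomp-itw} gives the displayed formula term by term: the support-leakage sum over $S(x)\setminus S(x')$ and the overlap excess-loss sum over $S(x)\cap S(x')$. The theorem's accompanying statements then transfer automatically. Pure privacy forces common supports, $(\varepsilon,\delta)$-LDP is equivalent to $\sup\Delta^{\rm Sk}_\varepsilon\le\delta$, and these hold on any symmetric relation $\calR$ as well.

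The only step needing any care is the strict positivity of $\psi_\mu$. This matters because it ensures that the sets $S(x)$ in the definition coincide with the true supports, as Theorem~\ref{thm:master-itw} requires; this is the caveat raised for additive kernels with internal zeros before Proposition~\ref{prop:additive-itw}. I would settle it with the Bessel series or the Poisson-difference argument above, after which the rest is pure substitution.
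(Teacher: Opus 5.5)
Your proposal is correct and follows essentially the same route as the paper. The paper likewise shows $\psi_\mu(k)>0$ via the event $N_1=\max(k,0)$, $N_2=\max(-k,0)$, notes $0<Z_x^{\rm Sk}(\mu)\le 1$, and then substitutes the kernel into \eqref{eq:decomp-itw}; your alternative Bessel-series positivity argument is an equally valid way to get the same positivity fact.
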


\begin{proof}
For any integer $k$, the event $N_1=\max(k,0)$ and $N_2=\max(-k,0)$ has positive probability when $\mu>0$. Thus $\psi_\mu(k)>0$. On any nonempty admissible support, $0<Z_x^{\rm Sk}(\mu)\le1$, since it is a sum of probabilities from a pmf. All assumptions of Definition~\ref{def:sparse-kernel-itw} hold. Substituting this kernel and its normalizer into \eqref{eq:decomp-itw} proves the displayed formula, including when the supports are infinite.
\end{proof}

\subsection{Radius-truncated sparse Skellam mechanisms}

For $r\in\mathbb N$, define $S_r(x)=\{y\in\Z:|y-x|\le r\}$ and
\begin{equation}\label{eq:skellam-radius-itw}
M^{\rm Sk}_{\mu,r}(y\mid x)=\frac{\psi_\mu(y-x)}{Z_r^{\rm Sk}(\mu)}\1\{|y-x|\le r\},
\end{equation}
where $Z_r^{\rm Sk}(\mu)=\sum_{j=-r}^{r}\psi_\mu(j).$
Let $F_\mu(m)=\Prob[Z\le m]$ denote the CDF of the symmetric Skellam law.

The following proposition expresses the radius-truncated Skellam normalizer and support-leakage term using the Skellam CDF.
\begin{proposition}\label{prop:skellam-support-itw}
For \eqref{eq:skellam-radius-itw},
\begin{equation}
Z_r^{\rm Sk}(\mu)=F_\mu(r)-F_\mu(-r-1)=2F_\mu(r)-1.
\end{equation}
For two centers separated by $h>0$ with $0<h\le 2r$, the ordered support-leakage term comparing $0$ against $h$ is
\begin{equation}\label{eq:skellam-leakage-itw}
\delta^{\rm Sk}_{\rm supp}(h)
=\frac{\sum_{k=-r}^{h-r-1}\psi_\mu(k)}{Z_r^{\rm Sk}(\mu)}
=
\frac{F_\mu(r)-F_\mu(r-h)}{2F_\mu(r)-1}.
\end{equation}
If $h>2r$, the two supports are disjoint and the ordered privacy defect equals one.
\end{proposition}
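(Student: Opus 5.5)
The plan is to derive everything from the definition of the CDF, the symmetry $\psi_\mu(k)=\psi_\mu(-k)$, and Proposition~\ref{prop:additive-itw} with $\phi=\psi_\mu$.

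\textbf{Normalizer.} Since $\psi_\mu$ is positive on every integer (shown in the proof of the general sparse Skellam proposition), the sum over the window is
\[
Z_r^{\rm Sk}(\mu)=\sum_{j=-r}^{r}\psi_\mu(j)=F_\mu(r)-F_\mu(-r-1).
\]
Symmetry gives $F_\mu(-m-1)=\Prob[Z\le -m-1]=\Prob[Z\ge m+1]=1-F_\mu(m)$ for every integer $m$. Taking $m=r$ yields $Z_r^{\rm Sk}(\mu)=2F_\mu(r)-1$.

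\textbf{Leakage term.} Fix $0<h\le 2r$. Positivity of $\psi_\mu$ means the truncation windows are exactly the actual supports. Theorem~\ref{thm:master-itw}, or equivalently the first sum in \eqref{eq:additive-defect-itw}, then identifies the ordered support leakage as the normalized mass of $S_r(0)\setminus S_r(h)=\{-r,\dots,h-r-1\}$. Its numerator is $F_\mu(h-r-1)-F_\mu(-r-1)$. Applying the same symmetry identity twice:
\begin{itemize}
\item $F_\mu(-r-1)=1-F_\mu(r)$;
\item $F_\mu(h-r-1)=F_\mu(-(r-h)-1)=1-F_\mu(r-h)$, valid because $r-h$ is an integer (possibly negative, which is harmless).
\end{itemize}
Subtracting gives the numerator $F_\mu(r)-F_\mu(r-h)$. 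Dividing by the normalizer produces \eqref{eq:skellam-leakage-itw}.

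\textbf{Disjoint case.} If $h>2r$, the windows $\{-r,\dots,r\}$ and $\{h-r,\dots,h+r\}$ are disjoint. The final claim of Proposition~\ref{prop:additive-itw}, or equally Proposition~\ref{prop:ball-overlap-itw}, then gives defect one.

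The only delicate point is the index bookkeeping in the reflection identity $F_\mu(-m-1)=1-F_\mu(m)$ for arbitrary integer $m$. This identity uses only symmetry and integrality, so the argument should be essentially routine.
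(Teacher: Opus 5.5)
Your proposal is correct and follows essentially the same route as the paper. It applies the reflection identity $F_\mu(-m-1)=1-F_\mu(m)$ at $m=r$ for the normalizer, and at $m=r$ and $m=r-h$ for the leakage mass on $\{-r,\ldots,h-r-1\}$, with disjointness of the windows handling $h>2r$. One cosmetic point: positivity of $\psi_\mu$ is not what makes the window sum equal $F_\mu(r)-F_\mu(-r-1)$; it is what guarantees $Z_r^{\rm Sk}(\mu)>0$ (the paper cites $\psi_\mu(0)>0$), so that the division is legitimate.
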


\begin{proof}
The substitution $j=y-x$ makes the normalizer independent of the center. For any integer $m$, symmetry gives
\[
F_\mu(-m-1)=\Prob[Z\ge m+1]=1-F_\mu(m).
\]
Consequently, the probability of $-r\le Z\le r$ is $F_\mu(r)-F_\mu(-r-1)=2F_\mu(r)-1>0$. The strict positivity follows, for example, from $\psi_\mu(0)>0$.

For $1\le h\le2r$, the outputs possible at center zero but not at center $h$ form $\{-r,\ldots,h-r-1\}$. Their unnormalized mass is
\begin{align*}
\sum_{k=-r}^{h-r-1}\psi_\mu(k)
&=F_\mu(h-r-1)-F_\mu(-r-1)\\
&=[1-F_\mu(r-h)]-[1-F_\mu(r)]\\
&=F_\mu(r)-F_\mu(r-h).
\end{align*}
Dividing by the normalizer proves \eqref{eq:skellam-leakage-itw}. The calculation includes $h=2r$, when the overlap consists of the single point $r$. For $h>2r$ the supports are disjoint, so the defect is one.
\end{proof}

The following proposition combines the Skellam support-leakage formula with the overlap excess-loss term.
\begin{proposition}\label{prop:skellam-itw}
For $0<h\le2r$,
\begin{align}\label{eq:skellam-defect-itw}
\delta^{\rm Sk}_h(\varepsilon,\mu,r)
&=
\frac{F_\mu(r)-F_\mu(r-h)}{2F_\mu(r)-1}
\notag\\
&\quad+
\frac{1}{2F_\mu(r)-1}
\sum_{k=h-r}^{r}\pos{\psi_\mu(k)-e^\varepsilon\psi_\mu(k-h)}.
\end{align}
If $h>2r$, then $\delta^{\rm Sk}_h(\varepsilon,\mu,r)=1$.
\end{proposition}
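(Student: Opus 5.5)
The plan is to treat this as a direct specialization of Proposition~\ref{prop:additive-itw} with $\phi=\psi_\mu$, and then to rewrite the support-leakage piece using Proposition~\ref{prop:skellam-support-itw}.

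First I will check the hypotheses of Proposition~\ref{prop:additive-itw}. The symmetric Skellam pmf is symmetric because $I_{|k|}$ depends only on $|k|$. It is strictly positive on $\Z$, as shown in the proof of the general sparse Skellam proposition. Its truncated normalizer satisfies $Z_r^{\rm Sk}(\mu)=2F_\mu(r)-1>0$ by Proposition~\ref{prop:skellam-support-itw}. Thus \eqref{eq:skellam-radius-itw} is exactly \eqref{eq:additive-itw} with $\phi=\psi_\mu$ and $Z_r^\phi=Z_r^{\rm Sk}(\mu)$. Because $\psi_\mu$ has no internal zeros, the truncation windows coincide with the actual supports, so Theorem~\ref{thm:master-itw} also applies verbatim. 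Translation invariance gives $\delta^{\rm Sk}_h=\Delta_\varepsilon(0,h)$.

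Second, for $0<h\le 2r$, I will invoke \eqref{eq:additive-defect-itw}. Its first sum, $\sum_{k=-r}^{h-r-1}\psi_\mu(k)/Z_r^{\rm Sk}(\mu)$, is exactly the support-leakage term. Proposition~\ref{prop:skellam-support-itw} already evaluates it as $(F_\mu(r)-F_\mu(r-h))/(2F_\mu(r)-1)$ through the symmetry identity $F_\mu(-m-1)=1-F_\mu(m)$. The second sum is the overlap excess loss over $G_0\cap G_h=\{h-r,\ldots,r\}$, with the kernel evaluated at $k$ and $k-h$. In it I only replace $Z_r^\phi$ by $2F_\mu(r)-1$. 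Adding the two pieces yields \eqref{eq:skellam-defect-itw}.

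Third, the case $h>2r$ follows either from the last clause of Proposition~\ref{prop:additive-itw} or from Proposition~\ref{prop:ball-overlap-itw}: the windows are disjoint, so the defect is one. The proof is essentially a substitution, and there is no real obstacle. The only point needing care is confirming that both input windows share the same normalizer $Z_r^{\rm Sk}(\mu)$, which translation invariance guarantees, so that no $\log(N_{x'}/N_x)$ correction appears in the overlap term. I would also note the edge case $h=2r$, where the overlap reduces to the single point $k=r$; it is handled consistently because the index ranges remain valid.
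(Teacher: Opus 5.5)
Your proposal is correct and follows the paper's proof essentially step for step. Like the paper, you specialize Proposition~\ref{prop:additive-itw} to $\phi=\psi_\mu$, use the common normalizer on the overlap, take the CDF form of the leakage and $Z_r^{\rm Sk}(\mu)=2F_\mu(r)-1$ from Proposition~\ref{prop:skellam-support-itw}, and dispose of $h>2r$ by disjointness of the windows (the paper also notes reverse-pair symmetry, which the statement itself does not need).
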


\begin{proof}
Use Proposition~\ref{prop:additive-itw} with $\phi=\psi_\mu$. On the overlap $\{h-r,\ldots,r\}$, both inputs have the same positive normalizer, and therefore
\begin{align*}
&[M^{\rm Sk}_{\mu,r}(k\mid0)-e^\varepsilon M^{\rm Sk}_{\mu,r}(k\mid h)]_+\\
&\qquad=\frac{[\psi_\mu(k)-e^\varepsilon\psi_\mu(k-h)]_+}{Z_r^{\rm Sk}(\mu)}.
\end{align*}
Proposition~\ref{prop:skellam-support-itw} supplies the complementary support-leakage mass and $Z_r^{\rm Sk}(\mu)=2F_\mu(r)-1$, giving \eqref{eq:skellam-defect-itw}. Disjointness handles $h>2r$. Symmetry, as proved for the additive formula, gives the same defect for the reverse ordered pair; no monotonicity in $h$ is needed when taking a finite maximum over the privacy range.
\end{proof}

The following proposition identifies the exact Bessel-ratio condition for positive overlap excess loss in the sparse Skellam mechanism.
\begin{proposition}\label{prop:bessel-itw}
On $S_r(0)\cap S_r(h)$, the pointwise privacy loss is
\begin{equation}\label{eq:skellam-loss-itw}
L_h^{\rm Sk}(k)=\log\frac{M^{\rm Sk}_{\mu,r}(k\mid0)}{M^{\rm Sk}_{\mu,r}(k\mid h)}=
\log\frac{I_{|k|}(\mu)}{I_{|k-h|}(\mu)}.
\end{equation}
The overlap summand at an index $k$ is positive exactly when
\begin{equation}\label{eq:bessel-threshold-itw}
\frac{I_{|k|}(\mu)}{I_{|k-h|}(\mu)}>e^\varepsilon.
\end{equation}
\end{proposition}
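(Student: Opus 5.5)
The plan is a direct computation from the definitions, using the radius-truncated form \eqref{eq:skellam-radius-itw} and the Bessel representation \eqref{eq:skellam-pmf-itw}. First identify the overlap: for $0<h\le 2r$, $S_r(0)\cap S_r(h)=\{h-r,\ldots,r\}$, as in the proof of Proposition~\ref{prop:additive-itw}; for $h>2r$ the overlap is empty and both claims are vacuous. For $k$ in the overlap, $|k|\le r$ and $|k-h|\le r$, so both indicators in \eqref{eq:skellam-radius-itw} equal one. Both probabilities are therefore strictly positive, since $\psi_\mu(j)>0$ for every integer $j$ by the positivity argument in the general sparse Skellam proposition. Hence the logarithm in \eqref{eq:skellam-loss-itw} is well defined, and $L_h^{\rm Sk}$ coincides with $L_M(0,h;k)$ from the pointwise-loss definition.

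Next compute the ratio. The normalizer $Z_r^{\rm Sk}(\mu)$ does not depend on the center, as shown in Proposition~\ref{prop:skellam-support-itw}, so it cancels:
\[
\frac{M^{\rm Sk}_{\mu,r}(k\mid0)}{M^{\rm Sk}_{\mu,r}(k\mid h)}=\frac{\psi_\mu(k)}{\psi_\mu(k-h)}=\frac{e^{-\mu}I_{|k|}(\mu)}{e^{-\mu}I_{|k-h|}(\mu)}.
\]
The factors $e^{-\mu}$ then cancel as well, which gives \eqref{eq:skellam-loss-itw}. Note that $I_n(\mu)>0$ for $\mu>0$, either from its positive power series or equivalently from $\psi_\mu>0$, so the Bessel ratio is a well-defined positive number.

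For the threshold claim, I would use the factorized form \eqref{eq:overlap-loss-form}. The overlap summand equals
\[
M(k\mid0)\bigl(1-e^{\varepsilon-L_h^{\rm Sk}(k)}\bigr)
\]
whenever it is positive. Since $M(k\mid0)>0$, the quantity $M(k\mid0)-e^\varepsilon M(k\mid h)$ is positive exactly when $L_h^{\rm Sk}(k)>\varepsilon$. Exponentiating, using that $\exp$ is strictly increasing, turns this into \eqref{eq:bessel-threshold-itw}. Equivalently, by Proposition~\ref{prop:skellam-itw}, the summand $[\psi_\mu(k)-e^\varepsilon\psi_\mu(k-h)]_+$ is positive if and only if $\psi_\mu(k)/\psi_\mu(k-h)>e^\varepsilon$.

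No step is a real obstacle. The only points needing care are strict positivity of the Bessel values, which justifies dividing and taking logarithms, and checking that the overlap indices keep both windows' indicators equal to one.
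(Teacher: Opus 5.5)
Your proposal is correct and follows essentially the same route as the paper. It uses strict positivity of the Skellam masses, cancels the common truncation normalizer and the factor $e^{-\mu}$, and reads off the sign of the overlap summand by comparing the likelihood ratio with $e^\varepsilon$; your explicit check that both window indicators equal one on $\{h-r,\ldots,r\}$ is a minor detail the paper leaves implicit.
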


\begin{proof}
For $\mu>0$, every Skellam mass is strictly positive. Thus on the overlap the likelihood ratio is well defined. The common truncation normalizer cancels, as does the common factor $e^{-\mu}$ in \eqref{eq:skellam-pmf-itw}, yielding
\[
\frac{M^{\rm Sk}_{\mu,r}(k\mid0)}{M^{\rm Sk}_{\mu,r}(k\mid h)}
=\frac{\psi_\mu(k)}{\psi_\mu(k-h)}
=\frac{I_{|k|}(\mu)}{I_{|k-h|}(\mu)}.
\]
Taking logarithms proves \eqref{eq:skellam-loss-itw}. Since the denominator is positive, the summand at $k$ is strictly positive if and only if this ratio exceeds $e^\varepsilon$. Equality gives a zero summand. Therefore the total overlap term is positive if and only if at least one overlap index satisfies \eqref{eq:bessel-threshold-itw}; it is zero when the overlap is empty or no such index exists.
\end{proof}

The following remark explains why the Skellam analysis uses the exact Bessel-ratio condition rather than a Gaussian-style threshold.
\begin{remark}
The Bessel-ratio condition \eqref{eq:bessel-threshold-itw} is exact. 
A Gaussian approximation to the Skellam pmf is accurate only in an appropriate central-limit regime, i.e., for indices sufficiently close to the mean relative to the scale of the distribution.
For fixed $\varepsilon>0$ and fixed $h$, a formal Gaussian threshold can place $k$ outside that regime. We therefore evaluate the Skellam overlap excess loss through the exact Bessel-ratio condition rather than through a Gaussian-style closed form.
\end{remark}

\subsection{Poisson-binomial-inspired sparse channels}

The Poisson-binomial mechanism in~\cite{chen2022poisson} is motivated by secure aggregation and communication efficiency in distributed privacy. It should not be interpreted here as an additive single-user LDP perturbation. Instead, we use a Poisson-binomial pmf as a bounded-support kernel template for a sparse local channel. Let $B_x$ be an integer-valued Poisson-binomial random variable with pmf $\beta_x(y)=\Prob[B_x=y]$. Define
\begin{equation}\label{eq:pb-itw}
M^{\rm PB}(y\mid x)=\frac{\beta_x(y)}{Z_x^{\rm PB}}\1\{y\in S(x)\},
\
Z_x^{\rm PB}=\sum_{y'\in S(x)}\beta_x(y').
\end{equation}
Assume $S(x)$ is a nonempty subset of $\{y:\beta_x(y)>0\}$. Then $0<Z_x^{\rm PB}\le1$, and $S(x)$ is the actual support. If an initially chosen admissible set includes zero-mass points, remove those points before applying support or log-ratio statements; if no positive-mass point remains, the channel is undefined.

The following proposition applies the same sparse-kernel accounting to Poisson-binomial-inspired bounded-support channels.
\begin{proposition}
The channel in \eqref{eq:pb-itw} satisfies Theorem~\ref{thm:master-itw} with kernel $K(x,y)=\beta_x(y)$. Its exact approximate privacy defect is
\begin{align}
\Delta^{\rm PB}_\varepsilon(x,x')
&=
\sum_{y\in S(x)\setminus S(x')}
\frac{\beta_x(y)}{Z_x^{\rm PB}}
\notag\\
&\quad+
\sum_{y\in S(x)\cap S(x')}
\left[\frac{\beta_x(y)}{Z_x^{\rm PB}}-e^\varepsilon\frac{\beta_{x'}(y)}{Z_{x'}^{\rm PB}}\right]_+.
\end{align}
If supports are common and
\begin{equation}
\sup_{x,x'}\sup_{y\in S}\left[
\log\frac{\beta_x(y)}{\beta_{x'}(y)}+
\log\frac{Z_{x'}^{\rm PB}}{Z_x^{\rm PB}}
\right]\le\varepsilon,
\end{equation}
then the mechanism is pure $\varepsilon$-LDP.
\end{proposition}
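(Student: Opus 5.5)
The plan is to reduce this statement to Theorem~\ref{thm:master-itw} by checking that the channel \eqref{eq:pb-itw} is a sparse kernel mechanism in the sense of Definition~\ref{def:sparse-kernel-itw}. I then read off both claims from the decomposition \eqref{eq:decomp-itw} and the pure-privacy criterion \eqref{eq:pure-itw}.

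\textbf{Step 1: the hypotheses of Definition~\ref{def:sparse-kernel-itw}.} Take the kernel $K(x,y)=\beta_x(y)$, which is trivially indexed by a dummy parameter $\theta$. It is nonnegative, since $\beta_x$ is a pmf. By the standing assumption, $S(x)$ is a nonempty subset of $\{y:\beta_x(y)>0\}$, so $K(x,\cdot)>0$ on $S(x)$. The normalizer satisfies $0<Z_x^{\rm PB}\le\sum_y\beta_x(y)=1$, which gives finiteness and positivity. Hence $N_x=Z_x^{\rm PB}$, and $S(x)$ is exactly the support of $M^{\rm PB}(\cdot\mid x)$. If the admissible set originally contained zero-mass points, I will note that they are removed beforehand, as the paper stipulates. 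This step is the only one needing any care, because the support statements of Theorem~\ref{thm:master-itw} rely on the kernel being strictly positive on $S(x)$.

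\textbf{Step 2: the exact defect.} Substituting $K(x,y)=\beta_x(y)$ and $N_x=Z_x^{\rm PB}$ into \eqref{eq:decomp-itw} yields the displayed formula directly. The support-leakage sum runs over $S(x)\setminus S(x')$, and the overlap excess-loss sum runs over $S(x)\cap S(x')$. No further computation is needed.

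\textbf{Step 3: pure privacy.} Assume common support $S$. The assumed supremum bound is literally \eqref{eq:pure-itw} with this kernel and normalizer, so the ``if'' direction of Theorem~\ref{thm:master-itw} gives pure $\varepsilon$-LDP. For a self-contained check, the bound gives $M^{\rm PB}(y\mid x)\le e^\varepsilon M^{\rm PB}(y\mid x')$ pointwise on $S$. Both sides vanish off $S$, so summing over any event $A$ gives the result. I expect no real obstacle here; the only subtlety is the positivity and support bookkeeping in Step~1.
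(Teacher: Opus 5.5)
Your proposal is correct and follows essentially the same route as the paper. Both verify that $\beta_x$ is strictly positive on $S(x)$ with $0<Z_x^{\rm PB}\le1$ so that Theorem~\ref{thm:master-itw} applies, read the defect off \eqref{eq:decomp-itw}, and obtain pure privacy from the pointwise log-ratio bound summed over events (the paper also notes the bound is necessary, which the statement does not require).
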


\begin{proof}
The support assumption gives a strictly positive kernel on $S(x)$, and $0<Z_x^{\rm PB}\le1$ ensures normalization. Theorem~\ref{thm:master-itw} therefore applies. On $S(x)\setminus S(x')$ the second probability is zero, giving the first sum. On the intersection, substituting the two normalized pmfs gives the second sum. All outputs outside $S(x)$ give zero positive part.

When the actual supports coincide, both pmfs are positive there and
\[
\log\frac{M^{\rm PB}(y\mid x)}{M^{\rm PB}(y\mid x')}
=\log\frac{\beta_x(y)}{\beta_{x'}(y)}
+\log\frac{Z_{x'}^{\rm PB}}{Z_x^{\rm PB}}.
\]
The displayed bound thus implies the pointwise pure-privacy inequality, and summation proves it for every event. In fact, under these support assumptions the bound is also necessary by Theorem~\ref{thm:master-itw}. Without positivity on the stated support, this logarithmic expression may be undefined.
\end{proof}

\section{Numerical Illustrations}

The experiments are exact finite-sum evaluations of the preceding formulas. They are not Monte Carlo simulations. We use the range-$H$ relation $\calR_H=\{(x,x'):1\le |x-x'|\le H\}$ on the integer line and report the worst ordered defect over this range:
\begin{equation}
\delta_H^M=\max_{1\le h\le H}\Delta^M_\varepsilon(0,h).
\end{equation}
Translation invariance and symmetry identify both ordered directions with the same positive separation, so this is an $(\varepsilon,\delta_H^M)$ guarantee on $\calR_H$, not a global all-pairs guarantee on $\Z$. Any finite-radius mechanism has global defect one if arbitrarily distant pairs are considered in the computation of $\delta$.

For each mechanism we also report a first-distance moment
\begin{equation}
R_1=\E[|Y-X|\mid X=x],
\end{equation}
which is independent of $x$ in the translation-invariant examples below. The goal of the plots is to separate support leakage from overlap excess loss and to show how the support radius affects these terms.

\subsection{Calibrated sparse randomized-response baseline}

The first experiment evaluates the sparse randomized-response instance in Proposition~\ref{prop:rr-itw}. For a support radius $t$, let $S_t(x)=\{x-t,\ldots,x+t\}$, so $s=2t+1$. We use the common-support randomized-response calibration
\begin{equation}
p_t=\frac{e^\varepsilon}{e^\varepsilon+2t},
\qquad
q_t=\frac{1}{e^\varepsilon+2t}.
\end{equation}
This calibration satisfies $p_t/q_t=e^\varepsilon$ when all inputs share a common support. In the sparse line-supported setting, however, supports shift with the input. Pure LDP therefore fails unless the supports coincide, and the relevant quantity is the exact approximate defect in \eqref{eq:rr-defect-itw} on the range $1\le h\le H$.

For completeness, Proposition~\ref{prop:rr-itw} gives, for integer $h\ge1$,
\[
\Delta^{\rm RR}_\varepsilon(0,h)=
\begin{cases}
hq_t, & 1\le h\le t,\\
p_t+(h-1)q_t, & t<h\le2t,\\
1, & h>2t.
\end{cases}
\]
Indeed, the left support difference has $h$ points when $h\le2t$. It excludes the truthful point for $h\le t$ and includes it for $h>t$; the truthful overlap excess vanishes since $p_t=e^\varepsilon q_t$. These expressions are nondecreasing in $h$, including both junctions, so the range maximum is attained at $h=H$. Also,
\[
R_1^{\rm RR}=2q_t\sum_{j=1}^t j=q_t t(t+1).
\]
These formulas reproduce the randomized-response columns of Table~\ref{tab:numeric-values-itw}.

Figure~\ref{fig:itw-experiments}(a) fixes $(\varepsilon,H)=(1,3)$. When $s$ is too small, some privacy-neighboring supports are disjoint and the defect equals one. Once the support is large enough to overlap throughout the range, the defect decreases as $s$ grows. At the same time, the expected reporting distance grows because probability mass is spread over more possible outputs. This baseline illustrates the privacy--distortion tension caused by sparse shifting supports, without optimizing $p$ and $q$.

\subsection{Radius-truncated sparse Skellam decomposition}

The second experiment evaluates the radius-truncated sparse Skellam mechanism. For each support radius $r$, Proposition~\ref{prop:skellam-itw} expresses the defect at separation $h$ as
\begin{equation}
\delta_h^{\rm Sk}(\varepsilon,\mu,r)
=\delta_{{\rm supp},h}^{\rm Sk}+\delta_{{\rm ov},h}^{\rm Sk},
\end{equation}
where the support-leakage part is the CDF term in \eqref{eq:skellam-leakage-itw} and the overlap part is the positive-part sum in \eqref{eq:skellam-defect-itw}. We evaluate both terms exactly and report the decomposition at the separation that attains the range-$H$ maximum.

Figure~\ref{fig:itw-experiments}(b) fixes $(\varepsilon,\mu,H)=(1,2,3)$ and varies $s=2r+1$. The support-leakage term decreases rapidly with $s$, because enlarging the support makes the shifted Skellam supports overlap more. The overlap excess-loss term behaves differently: as the support grows, more common points are included in the likelihood-ratio comparison, and some of these points can contribute positive excess loss. In this example the overlap term increases and then stabilizes, while the total defect approaches a plateau. Reporting the two terms separately reveals behavior that would be hidden by the total defect alone.

\begin{figure}[t]
\centering
\includegraphics[width=\columnwidth]{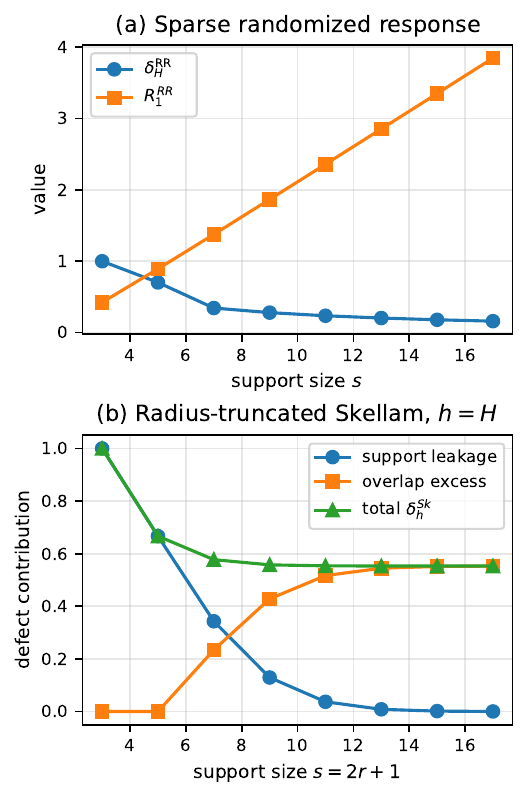}
\caption{Exact range-$H$ numerical evaluations for two sparse-kernel mechanisms. Panel (a) shows a calibrated sparse randomized-response baseline. Panel (b) decomposes the radius-truncated sparse Skellam defect into support leakage and overlap excess loss.}
\label{fig:itw-experiments}
\end{figure}

The numerical procedure is deliberately transparent. For each support size, we first identify the support difference and overlap for each separation $h=1,
\dots,H$. We then evaluate the two terms in the exact decomposition directly. Finally, we take the maximum over $h$. Table~\ref{tab:numeric-values-itw} reports the values used to generate Figure~\ref{fig:itw-experiments}. The randomized-response columns show that the calibrated baseline has a monotone privacy--distortion tradeoff: the defect decreases while $R_1^{\rm RR}$ increases. The Skellam columns show a different phenomenon: support leakage decreases, but the overlap excess-loss contribution increases before stabilizing. Thus total privacy improvement can plateau even when support leakage continues to shrink.

\begin{table}[t]
\centering
\scriptsize
\caption{Exact values used in Figure~\ref{fig:itw-experiments}. The randomized-response baseline uses $(\varepsilon,H)=(1,3)$ and the Skellam experiment uses $(\varepsilon,\mu,H)=(1,2,3)$.}
\label{tab:numeric-values-itw}
\resizebox{\columnwidth}{!}{%
\begin{tabular}{c cc ccc}
\toprule
$s$ & $\delta_H^{\rm RR}$ & $R_1^{\rm RR}$ & $\delta_{\rm supp}^{\rm Sk}$ & $\delta_{\rm ov}^{\rm Sk}$ & $\delta_H^{\rm Sk}$ \\
\midrule
3  & 1.0000 & 0.4239 & 1.0000 & 0.0000 & 1.0000 \\
5  & 0.7023 & 0.8931 & 0.6667 & 0.0000 & 0.6667 \\
7  & 0.3441 & 1.3764 & 0.3431 & 0.2342 & 0.5773 \\
9  & 0.2799 & 1.8660 & 0.1293 & 0.4282 & 0.5575 \\
11 & 0.2359 & 2.3588 & 0.0370 & 0.5167 & 0.5537 \\
13 & 0.2038 & 2.8536 & 0.0084 & 0.5447 & 0.5531 \\
15 & 0.1794 & 3.3496 & 0.0016 & 0.5515 & 0.5530 \\
17 & 0.1603 & 3.8465 & 0.0003 & 0.5528 & 0.5530 \\
\bottomrule
\end{tabular}%
}
\end{table}

The table also illustrates why a sparse-kernel analysis should keep the two defect components separate. If only the total Skellam defect were reported, the plateau near $0.553$ might suggest that the support radius no longer matters. The component decomposition shows a more precise explanation: increasing $s$ almost eliminates support leakage, but the overlap comparison becomes the dominant privacy cost. This behavior is kernel-specific and is not visible from support cardinality alone.

\section{Further Design Implications}

The sparse-kernel formulas give a small set of practical checks. First, the support map should be inspected before the kernel is tuned: if neighboring supports are disjoint, the range-restricted defect is already one. Second, once overlap is guaranteed, the kernel should be chosen so that the likelihood ratios on the overlap are controlled. For shell kernels this leads to shell-ratio constraints; for additive kernels it leads to a ratio condition on the noise pmf; for Skellam kernels it leads to the exact Bessel-ratio condition. Third, the smallest support is not always optimal once overlap loss is included: very small supports cause leakage, while very large supports can expose additional overlap points. The best design is therefore a balance between support size and kernel shape.

These implications are intentionally qualitative in this paper. A separate statistical mechanism-design treatment can take one mechanism family, choose a utility criterion, and optimize all design parameters. The role of the present paper is to provide the exact privacy accounting and kernel-specific formulas needed for such optimizations.

\section{Summary of Mechanism Instances}

Table~\ref{tab:instances-itw} summarizes the sparse-kernel instantiations and the role of each kernel or support structure. The table emphasizes that all mechanisms share the same support obstruction and the same approximate-defect decomposition; only the kernel-specific overlap term changes.

\begin{table}[]
\centering
\small
\caption{Sparse-kernel instances covered in this paper.}
\label{tab:instances-itw}
\begin{tabular}{p{0.24\columnwidth}p{0.31\columnwidth}p{0.34\columnwidth}}
\toprule
Family & Kernel or law & Main consequence \\
\midrule
Metric or graph balls & $S(x)=B_r(x)$ & Feasibility requires overlap of neighboring balls; sparsity is governed by $|B_r(x)|$. \\
Exponential & $e^{\eta u(x,y)}$ & Utility score controls overlap ratios; support mismatch contributes to $\delta$. \\
Staircase & $a_{d(x,y)}$ & Shell-ratio bounds can eliminate overlap excess loss. \\
Randomized response & piecewise $p,q,0$ & Support mismatch has a closed-form defect. \\
Skellam & $\psi_\mu(y-x)$ & Radius truncation gives CDF support leakage and exact Bessel-ratio overlap. \\
Poisson-binomial-inspired & $\beta_x(y)$ & Bounded-support kernels inherit exact support-leakage/overlap decomposition. \\
\bottomrule
\end{tabular}
\end{table}

\section{Discussion and Conclusion}

The sparse-kernel viewpoint separates a local mechanism into two components: the kernel $K_\theta(x,y)$, which controls relative probabilities on the overlap of supports, and the support map $S(x)$, which controls sparsity, plausibility, communication, and support leakage. This separation is exact at the privacy level. Pure LDP does not allow genuinely input-dependent supports. Approximate LDP does allow them, but the price is visible in the support-leakage term of the hockey-stick defect.

The mechanism instances show that this separation is useful beyond the Laplace and Gaussian kernels. Metric-ball supports reveal the role of support overlap and ball growth. Sparse exponential mechanisms use a utility score; sparse staircase mechanisms use shell weights; sparse randomized response uses a piecewise-constant finite-alphabet law; sparse Skellam mechanisms use a discrete additive-noise pmf; and Poisson-binomial-inspired channels use bounded-support kernels from distributed discrete-noise constructions. In each case, the same decomposition determines the exact approximate privacy parameter.

The Skellam analysis also illustrates why kernel-specific work remains necessary after the general decomposition is known. The support-leakage term has a clean CDF form, but the overlap term is governed by a Bessel-ratio condition rather than by the elementary Gaussian threshold. Thus the sparse-kernel framework provides a common language, while each kernel contributes its own overlap structure.

Overall, the results suggest the following design principle. Choose supports that are small enough for utility and communication, but sufficiently overlapping to keep support leakage within the desired $\delta$ budget; then tune the kernel so that the likelihood ratios on the overlap do not create excessive overlap loss. Extending exact optimization results to score-based kernels, shell-weight design, and discrete-noise mechanisms is a natural next step.
\bibliographystyle{IEEEtran}
\bibliography{ITW2027}

\end{document}